\theoremstyle{definition}
\newcommand{\comment}[1]{}
\newtheorem{lemma}{Lemma}
\newtheorem{definition}{Definition}
\newtheorem{proposition}{Proposition}
\newmdtheoremenv{problem_stmt}{Problem}
\def\BibTeX{{\rm B\kern-.05em{\sc i\kern-.025em b}\kern-.08em
    T\kern-.1667em\lower.7ex\hbox{E}\kern-.125emX}}
\begin{document}
\bstctlcite{IEEEexample:BSTcontrol}
\title{On Learning Spatial Provenance in Privacy-Constrained Wireless Networks\\

}



\author{\IEEEauthorblockN{Manish Bansal, \space 
  Pramsu Shrivastava and \space   J. Harshan}
Indian Institute of Technology Delhi, India\\

}
\maketitle

\begin{abstract}
In Vehicle-to-Everything networks that involve multi-hop communication, the Road Side Units (RSUs) typically aim to collect location information from the participating vehicles to provide security and network diagnostics features. While the vehicles commonly use the Global Positioning System (GPS) for navigation, they may refrain from sharing their precise GPS coordinates with the RSUs due to privacy concerns. Therefore, to jointly address the high localization requirements by the RSUs as well as the vehicles' privacy, we present a novel spatial-provenance framework wherein each vehicle uses Bloom filters to embed their partial location information when forwarding the packets. In this framework, the RSUs and the vehicles agree upon fragmenting the coverage area into several smaller regions so that the vehicles can embed the identity of their regions through Bloom filters. Given the probabilistic nature of Bloom filters, we derive an analytical expression on the error-rates in provenance recovery and then pose an optimization problem to choose the underlying parameters. With the help of extensive simulation results, we show that our method offers near-optimal Bloom filter parameters in learning spatial provenance. Some interesting trade-offs between the communication-overhead, spatial privacy of the vehicles and the error rates in provenance recovery are also discussed. 

\end{abstract}

\begin{IEEEkeywords}
Bloom filters, localization, spatial provenance, V2X networks, privacy, security
\end{IEEEkeywords}
\section{Introduction}
Vehicle-to-Everything (V2X) networks are expected to play a vital role in smart mobility as they enhance road safety and traffic efficiency \cite{v2x}. In V2X networks inter-vehicle communication (V2V) and vehicle-to-infrastructure (V2I) communication is facilitated via Road Side Units (RSUs). Given that mission-critical data are conveyed through these networks over a wireless medium, these networks are vulnerable to cyber-security threats from external adversaries \cite{v2x_attacks}. Therefore, next-generation V2X networks should possess the capability to detect security threats on their nodes and then initiate appropriate mitigation strategies. Thus, in order to assist detecting security threats on V2X networks, this paper proposes novel strategies to capture location information of its nodes via the data-flow logs.\looseness=-1

Inter-vehicle communication and direct communication between the vehicle and the RSUs may not be possible in V2X networks either due to transmit-power constraints or signalling blockage effects. In such a scenario, multi-hop communication assists in conveying the messages from the source vehicle to the RSU with the help of multiple intermediate nodes in an ad-hoc fashion. In order to learn the security vulnerabilities in such networks, the RSU should be able to remotely learn the state of the network, which contains the identity of packet forwarders, the sequence in which the packet is forwarded \cite{amogh}, \cite{suraj} and \cite{nodeembedding}, the spatial location of the vehicles that forwarded or originated the packet.  In particular, if the RSU knows the vehicles' locations, it can offer location-based security features. \looseness=-1

As vehicles use the Global Positioning System (GPS) for navigation purposes, they may embed the same when forwarding the packet. However, in a multi-hop communication setup, the participating vehicles may not want to share their exact location since the RSU and other vehicles can learn their exact location from the packet. Although incorporating an encrypted format of their GPS coordinates within the packet serves as a method to provide privacy on the vehicle's location from third-party observers, this approach increases the end-to-end delay for the packets and also discloses their precise whereabouts to the RSU, which may not be desirable.\looseness=-1 

From the above discussion, it is evident that the RSU would like to learn the precise locations of vehicles, whereas the forwarding vehicles might prefer not to disclose such information. Therefore, to strike a harmonious balance between the RSU's requirements and the privacy concerns of vehicles, we propose a solution where the RSU initially divides its coverage area into uniform-sized fragments and subsequently instructs the vehicles to incorporate the identities of their respective fragments when forwarding packets. To achieve this, vehicles can collaboratively determine an appropriate size of the fragment without disclosing their precise locations within the fragment. Consequently, the selection of fragment size emerges as a crucial parameter in safeguarding spatial provenance privacy.\looseness=-1

\begin{figure}
     \centering
    \includegraphics[trim={0 0 0 0},clip,scale=0.52]{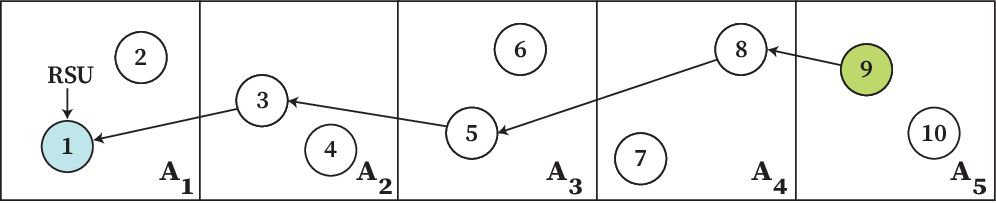}
        \setlength{\belowcaptionskip}{-15pt}   
        \caption{Depiction of a network wherein the nodes are distributed over a geographical area, and the RSU intends to learn spatial-provenance of the nodes through data-logs.}
        \label{fig: network Image}
\end{figure}

\subsection{Contributions}
We model the roads linearly in the context of vehicular networks, wherein a linear stretch of road in the communication range of the RSU is divided into linear segments of equal length, as exemplified in Fig. \ref{fig: network Image}. Subsequently, we propose the RSU to broadcast the segmentation information to vehicles as a dictionary. Finally, by using the dictionary received from the RSU, we propose the vehicles to learn and embed their segment's identity in the packet with the help of their GPS coordinates \cite{spatial}. This way, we show that the vehicles and the RSU can reach an amicable solution. To implement the above-proposed idea with certain communication constraints in the multi-hop setup, we use correlated linear Bloom filters (CLBF) to embed the spatial provenance information. In CLBF, two Bloom filters are used, wherein the first Bloom filter will be used to embed the edge identity, and the second Bloom filter will be used to embed the location of the vehicles. To help the vehicles choose the appropriate parameters of the Bloom filter, we derive an analytical expression for calculating the error rates in recovering spatial provenance from the CLBF. With the help of an optimization problem, we find a way to synthesize the Bloom filter parameters that can be used for embedding and recovering spatial provenance for any number of nodes and any number of fragments. To validate the impact of our method, we compare the analytical bound with the simulation results, which shows that the analytical expression offers near-optimal Bloom filter parameters.  

\section{Network model} \label{sec:network Model}
We consider a set of $N$ nodes, out of which $N-1$ nodes are mobile vehicles, and one node is the stationary RSU. Mobile nodes want to communicate with the RSU; however, they cannot communicate directly with the RSU due to limited power constraints. Therefore, nodes communicate with the RSU with the help of several intermediate nodes in a multi-hop manner.  To enhance the network's security, the RSU wants to learn the exact location of all the nodes in the network. However, due to privacy concerns, vehicles may not share their exact GPS coordinates with the RSU. Therefore, to balance the requirements of the RSU and the privacy of vehicles, RSU divides its coverage area into $r$ regions. \looseness=-1

The RSU models the vehicular network in a straight line and divides the road into smaller segments of equal length. The set of segments is denoted by $\Delta =\{A_1,A_2,\ldots,A_{r}\}$, as shown in Fig. \ref{fig: network Image}. We assume that the RSU is in area $A_1$, whereas the other mobile nodes are randomly distributed in $\Delta$. We also assume that the segments farther from the RSU get higher numbers of the index. For instance, the farthest segment in the coverage area is referred to as $A_{r}$. To assist the nodes in learning their regions, the RSU broadcasts a dictionary containing the boundaries of GPS locations mapped to different segments. Since the nodes are equipped with GPS, they will privately derive their segment identity (ID) using the broadcasted dictionary \cite{spatial}. The following section describes the routing protocol adopted for the multi-hop networks. \looseness=-1

\subsection{Routing Constraints} \label{sec:Routing Constraints}

We assume that routing protocols such as AODV \cite{aodv_original} are used in the network, which ensures minimum-hop delivery of the packet from the source to the destination. Therefore, we assume that the routing protocols prevent the formation of any loop or back-hop. For instance, a node in the area $A_i$ cannot send the packet to a node in the area $A_j$, where $i< j$. Another network constraint is that nodes can talk to the nodes in adjacent areas only (due to limited power constraint), i.e., if a node is present in area $A_3$, it can only send packets to area $A_2$ or $A_3$, and likewise, receive packets from segments $A_3$ or $A_4$.

With the above network protocol, the objective of the RSU is to learn the location of the nodes and the path travelled by the packet in the network. In V2V latency-constrained scenario, the constraint of the network model is to maintain constant packet size while ensuring node location confidentiality in a multi-hop network. Therefore, we use Bloom filters to embed the provenance information on the location as well as the path traversed by the packets.\looseness=-1

\section{Bloom Filter based Spatial-Provenance Recovery}\label{Bloom Filters}

To extract the path travelled by the packet and the locations of the vehicles that have forwarded the packet, we use Correlated Linear Bloom Filters (CLBF). CLBF is an extension of the standard Bloom filter that simultaneously performs insertions and queries from two sets. 
Bloom filter is selected for this task because of its properties such as fixed size, no need for additional encryption, guarantee of no false negatives, and constant time for insertions and queries. To explain the embedding operations with the Bloom filter, we denote the identity of the source node as $I_s$ and identities of the rest of the mobile nodes as $I_n$, where $n \in [1,~N-2]$. Also, the function $g: \{I_1, I_2,\ldots,I_{N-2}, I_s\} \rightarrow \Delta $ is used to map the identities of the nodes to the identities of the segment.\looseness=-1

\subsection{Embedding in Bloom filters}
It is assumed that the source node forwards the packet in a multi-hop manner to reach the RSU. Initially, the source node creates two empty Bloom filters: one for conveying the path information and another for conveying the segment information. These Bloom filters are referred to as edge ($BF_1$) and location ($BF_2$) Bloom filters, respectively, as shown in Fig. \ref{fig:CLBF}. Edge and location Bloom filters have sizes of $m_1$ and $m_2$ bits, respectively, with $k_1$ and $k_2$ Hash functions used for embedding the elements into the edge and location Bloom filters. At the start of the communication, the source node does not modify the edge Bloom filter; however, it embeds its node ID $I_s$ and its segment ID $g(I_s)$ into the location Bloom filter using $k_2$ number of Hash functions. Embedding in the location Bloom Filter ($BF_2$) will be done as follows: for a given $L$, where $L \in [1,k_2]$, $H_{L}(I_s, g(I_s), pid)$ randomly generates an index of location Bloom filter in $[1,m_2]$,
where $H_{L}$ denotes the $L^{th}$ Hash function, $pid$ denotes the packet ID, $g(I_s)$ denotes the segment ID. This packet is forwarded to the next node $I_n$, where $n \in [1,N-2]$ en-route to the RSU. Upon reception of the packet, the next node embeds the edge ID $(I_s, I_n, pid)$ in the edge Bloom filter using $k_1$ number of Hash functions. For more details on the edge embedding process, we refer the readers to \cite{amogh}. Whereas, in the location Bloom filter, the node $I_n$, $n \in [1,N-2]$,  embeds its segment ID using $H_{L}(I_n, g(I_n), pid)$ similar to that of the source node. The same process is followed at every node when the packet is en-route to the RSU.\looseness=-1


\begin{figure}
    \centering
    \includegraphics[scale=1.4]{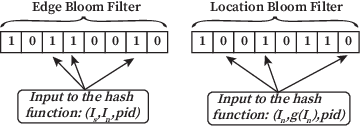}
    \vspace{0.4cm}
    \caption{Depiction of embedding in CLBF at the intermediate nodes which comprise edge and location Bloom filters of size $m_1 =8$ bits and $m_2=8$ bits, respectively. Here $k_1=k_2=3$ number of Hash functions are used for embedding an element in CLBF.}
    \label{fig:CLBF}
\end{figure}



\subsection{Recovery from Bloom filter} \label{sec:recovery}
After successful packet reception at the RSU, it intends to recovers the path travelled by the packets and the locations of the nodes that forwarded the packet. To do this, the RSU uses the edge Bloom filter and tests all possible edges to create a set of recovered edges. We assume that a Depth First Search (DFS) algorithm runs on the recovered set of edges to obtain the correct paths of the desired hop length. For each path, we pair each node of the chosen path with all possible segments, and these pairs are tested for their presence in $BF_2$. Using the set of all pairs that were recovered from $BF_{2}$, all possible paths and their locations are constructed, and finally the set of correct paths that satisfy the communication constraints described in Section \ref{sec:Routing Constraints} are extracted. In the next section, we will find the optimal network parameters to convey the spatial provenance. \looseness=-1


\section{On the choice of the Bloom filter parameters}
As the packet can travel only one path, the RSU should recover exactly one path. However, Bloom filters that save space and time in the embedding and query process may result in false positives, which could lead to multiple paths. The event of false positives, denoted by $E_{fp}'$, is defined as the scenario when more than one possible arrangement of nodes and segments satisfying the constraints in Section \ref{sec:Routing Constraints} is recovered using the Bloom filters. Towards optimizing the performance, we need to choose the parameters $m_1,~ m_2,~k_1$ and $k_2$ so as to minimize the average probability of false positives. Therefore, we propose the following problem statement.\looseness=-1


{\begin{problem_stmt}{}
Given $N$, a geographical bifurcation of the area $\Delta$, and given CLBF size $m$ solve:
\begin{IEEEeqnarray}{rCl}
{k_1}^{*}, {k_2}^{*} =\arg \min_{\{k_1,k_2\}} \Pr(E_{fp}'),
\end{IEEEeqnarray}
 s.t. $1\leq k_1\leq m_1$, $1\leq k_2\leq m_2$ and $m_1 + m_2 = m$.
\end{problem_stmt}}

Since the recovery process occurs in phases, i.e., restoring the path from $BF_1$ and subsequently recovering locations from $BF_2$ (see Section \ref{sec:recovery}), we can express the overall false positive  probability as 
\begin{equation*}
\Pr(E_{fp}') = \Pr(fp_1) + (1-\Pr(fp_1)) \times \Pr(fp_2)
\end{equation*}
where $\Pr(fp_1)$ denotes the average false positive probability of $BF_1$, and $\Pr(fp_2)$ denotes the average false positive probability of $BF_2$ conditioned on no false positive event from $BF_{1}$. If the size of $BF_1$ is large enough, and the number of Hash functions in it is already optimized, then $\Pr(fp_1) \approx 0$. As a result, we can approximate $\Pr(E_{fp}') \approx \Pr(fp_2)$, conditioned that probability $\Pr(fp_1) \approx 0$.\looseness=-1

Using the assumption stated above, we propose a simplified, yet non-trivial problem statement on optimizing the parameters of $BF_{2}$.\looseness=-1

{\begin{problem_stmt}{}\label{problem2}
Given $N$, a geographical bifurcation of the area $\Delta$, and a given distribution of the nodes, solve:
\begin{IEEEeqnarray}{rCl}
{k_2}^{*} =\arg \min_{\{k_2\}} \Pr(E_{fp}),
\end{IEEEeqnarray}
 s.t. $1\leq k_2\leq m_2$.
\end{problem_stmt}}

\subsection{Optimization of Bloom Filter Parameters}

In order to optimize the parameters defined in Problem \ref{problem2}, we formally define the false positive event of $BF_{2}$.

\begin{definition}
A false positive event, denoted by $E_{fp}$, occurs when a single path is recovered from $BF_1$, however, more than one location sequence satisfying the routing constraints in Section \ref{sec:Routing Constraints} are recovered from $BF_2$.
\end{definition}

Let us assume that a single path of hop length $h$ is already recovered from $BF_1$, and the nodes of this path have been paired with various segment IDs to recover the locations of the nodes. Given that the false positive probability of $BF_{2}$ is a function of $m_2$, $k_2$, $h$ and $\Delta$, we need to derive an expression as a function of these parameters.

Given that $k_{2}$ Hash functions are used by every node, the number of bits lit in $BF_{2}$ through the journey of the packet is a random variable with a minimum value of $1$ and a maximum value of $min(m_2,k_2h)$. Thus, using $\alpha$ to denote the number of bits that have been lit in $BF_2$ by all the forwarding nodes, the false positive probability is given in the following proposition.

\begin{proposition} \label{prop:Pr_Efp} The probability of false positive event, denoted by, $\Pr(E_{fp})$ can be written using the Bayes' rule as:
\begin{IEEEeqnarray}{rCl}
    \Pr(E_{fp}) = \sum\limits_{\alpha=1}^{\min(m_{2},k_{2}h)}\Pr(E_{fp}|\alpha)\Pr(\alpha),
\end{IEEEeqnarray}
where $\Pr(E_{fp}|\alpha)$ denotes the false positive probability conditioned on $\alpha$ bits lit in $BF_{2}$, and $\Pr(\alpha)$ denotes the probability of $\alpha$ bits lit in the $BF_2$, which is given by \cite{suraj}:
\begin{IEEEeqnarray}{rCl}
\label{eqn:Pr_alpha}
\Pr(\alpha)=\frac{{m_2 \choose \alpha}  \sum\limits_{\gamma=0}^{\alpha}(-1)^{\gamma} {{\alpha}\choose{\gamma}}(\alpha-\gamma)^{k_2 h} }{m_2^{k_2 h}}.
    \end{IEEEeqnarray}
\end{proposition}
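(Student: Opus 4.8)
The plan is to establish the proposition in two independent parts: first the decomposition of $\Pr(E_{fp})$ as a sum indexed by the number of lit bits, and then the closed-form expression for $\Pr(\alpha)$ quoted in (\ref{eqn:Pr_alpha}).

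For the decomposition, I would invoke the law of total probability. The number of lit bits in $BF_2$, denoted $\alpha$, is a discrete random variable whose level sets $\{\alpha = j\}$ are mutually exclusive and exhaustive, hence form a partition of the sample space. Since a recovered path of hop length $h$ has $h$ nodes, each embedding through $k_2$ hash functions, exactly $k_2 h$ write operations are performed on $BF_2$; at least one bit is necessarily set (so $\alpha \ge 1$), and no more than $\min(m_2,k_2 h)$ bits can be set, as $\alpha$ is bounded above by both the filter length $m_2$ and the number of writes $k_2 h$. Conditioning on each event $\{\alpha = j\}$ and summing $\Pr(E_{fp}\mid\alpha)\Pr(\alpha)$ over $\alpha = 1,\ldots,\min(m_2,k_2 h)$ then reproduces the stated expression directly, with $\Pr(E_{fp}\mid\alpha)$ left as the per-level conditional term.

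For $\Pr(\alpha)$, I would model the embedding as a classical occupancy (balls-into-bins) experiment: under the standard idealized Bloom-filter assumption that each of the $k_2 h$ hash evaluations selects an index in $[1,m_2]$ independently and uniformly at random, we are throwing $k_2 h$ balls into $m_2$ bins. The total number of equally likely outcomes is $m_2^{k_2 h}$, giving the denominator. To count the outcomes with exactly $\alpha$ occupied bins, I would first select which $\alpha$ of the $m_2$ bins are occupied, contributing $\binom{m_2}{\alpha}$, and then count the surjective placements of the $k_2 h$ balls onto those $\alpha$ bins. Applying inclusion–exclusion over the collections of bins forced to remain empty, the number of surjections equals $\sum_{\gamma=0}^{\alpha}(-1)^{\gamma}\binom{\alpha}{\gamma}(\alpha-\gamma)^{k_2 h}$ (equivalently $\alpha!\,S(k_2 h,\alpha)$ with $S$ the Stirling number of the second kind). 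Multiplying the two counts and normalizing by $m_2^{k_2 h}$ yields precisely (\ref{eqn:Pr_alpha}).

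The only technically delicate step is the surjection count via inclusion–exclusion; the remainder is bookkeeping, and the decomposition is immediate once the support of $\alpha$ is pinned down. The principal caveat is a modeling assumption rather than a mathematical obstacle: the clean $m_2^{k_2 h}$ denominator and the surjection numerator hold only if the $k_2$ hashes applied by a single node and the hashes across distinct nodes are all treated as mutually independent uniform draws, so that both intra-node and inter-node collisions are absorbed into one occupancy model. I would flag this assumption explicitly, as it is exactly what licenses the quoted formula and is consistent with the derivation taken from \cite{suraj}.
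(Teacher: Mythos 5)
Your proposal is correct and follows exactly the reasoning the paper relies on: the paper itself states this proposition without a written proof, invoking ``Bayes' rule'' (really the law of total probability, as you correctly identify) for the decomposition and citing \cite{suraj} for the expression for $\Pr(\alpha)$. Your occupancy-model derivation---choosing the $\alpha$ occupied positions in $\binom{m_2}{\alpha}$ ways and counting surjections of the $k_2 h$ uniform hash writes onto them by inclusion--exclusion, normalized by $m_2^{k_2 h}$---is precisely the standard argument behind that cited formula, and your explicit flagging of the idealized independent-uniform-hash assumption is a useful addition rather than a deviation.
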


Before we present our method to compute $\Pr(E_{fp}|\alpha)$, we discuss preliminaries related to the false positive events associated with Bloom filters. 

\begin{definition}\label{def:collision} A collision event in a Bloom filter is defined as an event when we verify an object from the Bloom filter that was not initially embedded in it.
\end{definition}

Given a Bloom filter of size $m$ bits with $\alpha$ bits already lit in it, the probability of a collision event is given by $(\alpha/m)^{k}$, where $k$ is the number of Hash functions. With respect to $BF_{2}$, when we verify the segment ID of a node, the corresponding probability of a collision event, denoted by $p_1$, is given as $p_1 = (\alpha/m_2)^{k_2}$. Consequently, the probability of the non-collision event, denoted by $p_2$, is given as, $p_2 = 1 - p_1$. Note that the above discussed collision events are crucial in enumerating the false positive events in $BF_{2}$. To enumerate the false positive events in $BF_2$, we define a valid segment sequence as the set of ordered locations that satisfy the communication constraints, as defined in Section \ref{sec:Routing Constraints}. In simple terms, a valid sequence of segments is a sequence from $\Delta$ such that the indexes of segments are in non-decreasing order with the condition that the difference between consecutive indexes is not more than one. For instance, for a hop-length of $h = 4$ and $|\Delta| = 6$, a valid sequence is $\{A_{1}, A_{2}, A_{2}, A_{3}\}$, and not $\{A_{1}, A_{2}, A_{4}, A_{5}\}$. In general, denoting the set of all valid sequences of segments as $\mathcal{P}$, the following proposition presents the result on the cardinality of $\mathcal{P}$ as a function of $|\Delta|$ and $h$.

\begin{lemma} \label{lemma:path}
With $\mathcal{P}$ denoting the set of all valid sequences of segments, we have $|\mathcal{P}| = \sum\limits_{i=1}^{|\Delta|}\binom{h}{i-1}.$    
\end{lemma}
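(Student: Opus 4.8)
The plan is to count the valid segment sequences in $\mathcal{P}$ by encoding each one through its pattern of index increments, which converts the count into a sum of binomial coefficients. A valid sequence has a fixed number of entries determined by the hop length $h$; its entries are non-decreasing, consecutive entries differ by at most one, the smallest admissible index is $A_1$ (the RSU's segment, toward which the packet flows), and the largest admissible index is $A_{|\Delta|}$. The first observation I would record is that such a sequence is completely determined by its starting index together with the list of \emph{gaps} between consecutive entries, each gap being either $0$ (remain in the same segment) or $1$ (advance to the next segment). Because the routing constraints anchor the start at $A_1$ and there are exactly $h$ gaps, every valid sequence corresponds to a binary word of length $h$ in which a $1$ marks an advance.

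Next I would classify the sequences by the number of distinct segments they use, say $i$. Since the entries are non-decreasing with unit steps and begin at $A_1$, using $i$ distinct segments is equivalent to reaching maximum index $A_i$, which forces exactly $i-1$ of the $h$ gaps to be advances and the remaining $h-(i-1)$ gaps to be stays. Choosing which gaps carry the advances can be done in $\binom{h}{i-1}$ ways, and each such choice yields a distinct valid sequence contained in $\{A_1,\dots,A_{|\Delta|}\}$ precisely when $i \le |\Delta|$. Summing over all feasible values of $i$ then gives
\begin{equation*}
|\mathcal{P}| = \sum_{i=1}^{|\Delta|}\binom{h}{i-1},
\end{equation*}
with the convention that $\binom{h}{i-1}=0$ whenever $i-1>h$, reflecting that a sequence cannot advance more times than it has gaps.

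The main obstacle I anticipate is bookkeeping rather than deep combinatorics: I must verify that the encoding is a genuine bijection. In one direction I would check that every valid sequence maps to a binary word whose number of ones is at most $|\Delta|-1$; in the other, that every such word reconstructs a unique non-decreasing, unit-step sequence respecting the upper boundary $A_{|\Delta|}$. The one subtlety that demands care is aligning the number of gaps with the hop length $h$ and confirming that the upper summation limit $|\Delta|$, together with the vanishing of the out-of-range binomials, correctly enforces the constraint that no index exceeds $A_{|\Delta|}$. Once this alignment is pinned down, the identity follows immediately, and I would cross-check it against the small instance $|\Delta|=6$ discussed above to guard against an off-by-one error in the entry-versus-gap count.
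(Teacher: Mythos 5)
Your proof is correct and takes essentially the same route as the paper's: both arguments classify the valid sequences by the number $i$ of distinct segments used and count $\binom{h}{i-1}$ per class, and your encoding by which of the $h$ gaps carry an advance is exactly the paper's stars-and-bars step of selecting $i-1$ separators among the $h$ gaps when distributing the $h+1$ nodes into $i$ non-empty segments. The only cosmetic difference is that you phrase the bijection directly in terms of binary increment words rather than compositions of $h+1$ into $i$ positive parts.
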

\begin{proof}
To count the number of valid segment sequences, we focus on the required order on subscripts of the segment index $A_{i}$, for $i \in [1, |\Delta|]$. As the number of nodes in a $h$-hop path is $h+1$, we look at the sequences of non-decreasing numbers from $1$ to $h+1$. Due to the communication constraints, the indices in a valid sequence range from $1$ to $i$, for $i \in [1, |\Delta|]$, although we may have $|\Delta| > h$. Furthermore, let $b_l$ denote the number of nodes present in the segment with index $l$, where $1 \leq l \leq i$. It is clear that $b_{l} \geq 1$ for each $l$ in the above range. This also implies that for each $i$, the sum of $b_{l}$'s over all $l$ must be $h+1$. Thus, the number of valid sequences of length $h+1$ with values from $\{1, 2, \ldots, i\}$ can be counted as the distinct number of unique tuples of $i$ positive integers, ($a_1, a_2,\ldots, a_i), ~a_j \in \mathbb{N},~ j\in[1,i]$, where $\sum \limits_{k=1}^i a_k = h+1$. To count the latter, the problem can be modelled as the distribution of $h+1$ identical objects in $i$ different bins, such that no bin is empty. To solve this problem, consider $h+1$ objects arranged in a line, and note that there are $h$ gaps between these objects. If we select $i-1$ gaps out of $h$ gaps, we can form $i$ different bins. Thus, the total number of possible ways is $\binom{h}{i-1}$. Overall, since $i$ can vary from $1$ to $|\Delta|$, we have $|\mathcal{P}| = \sum\limits_{i=1}^{|\Delta|}\binom{h}{i-1}.$ This completes the proof.
\end{proof} 

Based on Section \ref{Bloom Filters}, during recovery of the node-segment pairs from the Bloom filter $BF_2$, we check all possible combinations of node-segment pairs in the Bloom filter. Therefore, the RSU verifies $h|\Delta|$ pairs in the Bloom filter, out of which only $h$ were actually embedded. We define the set of remaining pairs as false pairs, as formally defined below.
\begin{definition}
The set of node-segment pairs that were not originally embedded in $BF_2$, however, are checked for membership in $BF_2$ at the RSU, is called the set of false pairs, denoted by $\mathcal{F}$, where $|\mathcal{F}| = h(|\Delta|-1)$.
\end{definition}
\begin{definition}
The set of node-segment pairs recovered from $BF_2$, not on the original segment sequence, is called extra recoveries, denoted by $\mathcal{R}$. Note that $\mathcal{R} \subseteq \mathcal{F}$.
\end{definition}

To count the false positive events, suppose that the nodes are distributed in the network with the location sequence lying in the set $\mathcal{P}$. For that given segment sequence, false positive event occurs when either one of the other sequences in $\mathcal{P}$ is recovered from the Bloom filter. Since other sequences in $\mathcal{P}$ are recovered through extra recoveries, we need to relate extra recoveries with the false positive events. During the recovery process from $BF_{2}$, the number of extra recoveries is a random variable. Therefore, we need to consider all possible cases of extra recoveries such that $|\mathcal{R}| \leq |\mathcal{F}|$. Assuming $\mathcal{P} = \{\mathbf{p}_{1}, \mathbf{p}_{2}, \ldots, \mathbf{p}_{|\mathcal{P}|}\}$, suppose that $\mathbf{p}_{i}$ for some $i \in [1, |\mathcal{P}|]$, is the true sequence of segments. If the Bloom filter is lit based on this sequence $\mathbf{p}_{i}$ from the $h$ nodes, let $C_{ij}$ represent the number of sequences in $\mathcal{P}\backslash \{\mathbf{p}_{i}\}$ which can be recovered from the Bloom filter when $j = |\mathcal{R}|$ extra recoveries are lit in the Bloom filter. If $C_{ij}$ can be computed for all $j \in [1, |\mathcal{F}|]$, then the false positive probability for the sequence $\mathbf{p}_{i}$ can be computed as
\begin{equation*}
\Pr(E_{fp}|\alpha, \mathbf{p}_{i}) = \sum_{j=1}^{h(|\Delta|-1)} p_1^j\times p_2^{h(|\Delta|-1)-j}\times C_{ij}.
\end{equation*}
Thus, the false positive probability averaged over all possible true segment sequences can be computed as
\begin{equation*}
\Pr(E_{fp}|\alpha) = \sum_{i = 1}^{|\mathcal{P}|}\Pr(E_{fp}|\alpha, \mathbf{p}_{i})\Pr(\mathbf{p}_{i})
\end{equation*}
\begin{equation}
\label{overall_fp}
 = \frac{1}{|\mathcal{P}|}\sum_{j=1}^{h(|\Delta|-1)} p_1^j\times p_2^{h(|\Delta|-1)-j}\times (\sum_{i = 1}^{|\mathcal{P}|}C_{ij})
\end{equation}
where uniform distribution is assumed on $\mathbf{p}_{i}$. Using the above expression, given the parameters of Bloom filters, the expression for $\Pr(E_{fp}|\alpha)$ can be computed provided $C_{j} \triangleq \sum_{i = 1}^{|\mathcal{P}|}C_{ij}$ is computed. In the next subsection, we present a way to compute $\{C_{j}, j = 1, 2, \ldots, h(|\Delta|-1)\}$. Once we present this method, we can use its outcome and apply them in \eqref{overall_fp}. 

\subsection{On computing $\{C_{j}\}$}

First, we discuss our method to compute $C_{1}$, and then subsequently provide a method to compute $\{C_{j}, j > 1\}$
Let us now consider a case where we get only one extra recovery from $BF_2$, i.e., $|\mathcal{R}| = 1$. Note that we can get one extra recovery in $|\mathcal{F}|$ possible ways, and not all these $|\mathcal{F}|$ pairs will cause a false positive event. For example, consider a segment sequence $(I_{1},A_1)->(I_{2},A_2)->(I_{3},A_2)->(I_{4},A_3)$. If $\mathcal{R} = \{(I_{4},A_1)\}$, this will not cause a false positive event, as substituting the extra recovery in the original path will violate the node communication constraint in Section \ref{sec:Routing Constraints}. However, if $\mathcal{R} = \{(I_{4},A_2)\}$, then when we replace this with $(I_{4},A_3)$, this will cause a false positive event. For a given segment sequence, let the number of false pairs that results in a false-positive event when appearing as single recovery is denoted by $J$. With that, we have the following bound.

\begin{definition} \label{prop:limit j}
The minimum and maximum value of $J$ is $1$ and $2|\Delta|-2$, respectively. 
\end{definition}
\begin{proof}
The minimum value of $J$ is 1 since $J = 0$ cannot result in false positives, trivially. However, for the maximum value, consider a segment sequence with all the segments up to $|\Delta|$. Excluding the extreme segment IDs, i.e. $A_1$ and $A_{r}$, for all the other segments, we may either increase or decrease the segment index to obtain another possible valid segment sequence, potentially. Thus, we can have $2(|\Delta|-2)$ as the maximum value of $J$ by considering the false pairs that replace the intermediate nodes. For each of the extreme nodes, the possibility is only one, i.e., nodes in region $A_1$ can only increase to create an alternate valid segment sequence, whereas the nodes in region $A_{r}$ can only decrease. Therefore, the maximum limit on the value of $J$ including all the nodes is $2(|\Delta-2|)+2 = 2|\Delta|-2$.
\end{proof}
For different valid segment sequences in the set $\mathcal{P}$, the value of $J$ can be different. We introduce a parameter, referred to as $f_J$, which is defined as the number of segment sequences in $\mathcal{P}$, that have exactly $J$ false pairs resulting in false positive events
when appearing as extra single recovery. Once $f_{J}$ is estimated for each $J$, then it is easy to verify that 
$C_{1} = \sum_{J = 1}^{2|\Delta| - 2} f_{J}.$

In the rest of this section, we present a method to compute $\{C_{j}, j > 1\}$. Towards that direction, we consider the events when $j$ false pairs in $\mathcal{F}$ are jointly recovered from the Bloom filter, and then they together result in a false positive event. We refer to such events as extra $j$ recoveries.

\begin{proposition}\label{prop:value extension}
Given $J$ for a segment sequence, the number of sets of false pairs such that $|\mathcal{R}| = j$, which will cause a false positive event is given by:
\begin{eqnarray}\label{eq:j}
\sum_{l=1}^{J} \binom{h(|\Delta|-1) - l}{j-1}.
\end{eqnarray}
\end{proposition}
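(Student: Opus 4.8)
The plan is to reduce the statement to a purely combinatorial counting problem over the $F := |\mathcal{F}| = h(|\Delta|-1)$ false pairs. First I would partition $\mathcal{F}$ into the $J$ \emph{triggering} pairs --- those that, as in the discussion preceding Definition~\ref{prop:limit j}, individually produce an alternative valid sequence when they appear as a single recovery --- and the remaining $F-J$ \emph{inert} pairs. The quantity to be computed is then the number of $j$-element subsets $\mathcal{R}\subseteq\mathcal{F}$ that give rise to a false positive event.

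The crux is the following characterization: a subset $\mathcal{R}$ with $|\mathcal{R}|=j$ causes a false positive if and only if it contains at least one triggering pair. The ``if'' direction is immediate: if $\mathcal{R}$ contains a triggering pair $(I_k,A_s)$, then replacing node $I_k$'s true segment by $A_s$ yields a second valid sequence all of whose node--segment pairs are present in $BF_2$ (the unchanged pairs are embedded, the changed pair is recovered), so a false positive occurs regardless of the other $j-1$ recoveries. Establishing the ``only if'' direction is the \textbf{main obstacle}, since one must argue that every recovered alternative valid sequence is reached through at least one single-segment substitution, i.e.\ that the family of false-positive-inducing subsets is exactly the union, over the $J$ triggering pairs, of the subsets containing them. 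I would handle this by appealing to the single-substitution model underlying the definition of $J$ and Definition~\ref{prop:limit j}, and I would explicitly flag that coordinated multi-position alternatives (several inert pairs that jointly realize a distant valid sequence) are excluded by this modelling convention; this is the one place where the argument is not automatic.

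Given the characterization, the count follows by a direct argument that reproduces the stated sum term by term. I would totally order the false pairs so that the triggering pairs occupy positions $1,\dots,J$, and then classify each admissible $\mathcal{R}$ by the smallest position $l\in\{1,\dots,J\}$ of a triggering pair it contains. For a fixed $l$, such a subset must include triggering pair $l$, must exclude triggering pairs $1,\dots,l-1$, and must choose its remaining $j-1$ elements freely from the $F-l$ pairs occupying positions greater than $l$; this contributes $\binom{F-l}{j-1}$ subsets. Summing over $l=1,\dots,J$ gives
\[
\sum_{l=1}^{J}\binom{h(|\Delta|-1)-l}{j-1},
\]
as claimed. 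As an independent check I would verify that this equals the complementary count $\binom{F}{j}-\binom{F-J}{j}$ of subsets containing at least one triggering pair, which follows by telescoping the Pascal identity $\binom{F-l+1}{j}-\binom{F-l}{j}=\binom{F-l}{j-1}$ over $l=1,\dots,J$.
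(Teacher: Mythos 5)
Your counting argument is correct and is, in substance, what the paper intends: both you and the paper enumerate the $j$-element subsets of $\mathcal{F}$ that contain at least one of the $J$ triggering pairs. However, your write-up is tighter than the paper's in two respects. First, the paper's verbal derivation (``two pairs from the $J$ pairs are fixed, and then the rest of the $j-2$ pairs can be picked\ldots'') literally describes the decomposition $\sum_{l=1}^{J}\binom{h(|\Delta|-1)-l}{j-l}$, which does not match the displayed formula; your classification by the smallest position $l$ of a triggering pair in $\mathcal{R}$ --- include pair $l$, exclude pairs $1,\dots,l-1$, choose the remaining $j-1$ freely --- is the argument that actually yields $\sum_{l=1}^{J}\binom{h(|\Delta|-1)-l}{j-1}$ term by term, and your telescoping check against $\binom{F}{j}-\binom{F-J}{j}$ confirms it.

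Second, the ``only if'' obstacle you flag is a genuine gap in the proposition itself, which the paper passes over silently: a set $\mathcal{R}$ containing no triggering pair can still cause a false positive via a coordinated multi-position substitution. Concretely, for a true sequence $(A_1,A_1,A_1)$ over three embedding nodes $I_1,I_2,I_3$ with $|\Delta|\geq 3$, the only triggering pair is $(I_3,A_2)$, so $J=1$; yet $\mathcal{R}=\{(I_2,A_2),(I_3,A_3)\}$ recovers the valid alternative sequence $(A_1,A_2,A_3)$, a false positive not counted by \eqref{eq:j}. So the stated expression is exact only under the single-substitution convention you made explicit, and is in general a lower bound on the number of false-positive-causing sets; you identified precisely the right weak point, which the paper's proof does not acknowledge.
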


\begin{proof}
We derive this expression by sequentially selecting one of the $J$ pairs, and then selecting the remaining, $j-1$ pairs from the remaining false pairs of $\mathcal{F}$, in order to construct the set $\mathcal{R}$. Subsequently, two pairs from the $J$ pairs are fixed, and then the rest of the $j-2$, pairs can be picked from the remaining false pairs of $\mathcal{F}$, in order to construct the set $\mathcal{R}$. This process is repeated up to the case of fixing $J$ pairs to fill $\mathcal{R}$, and then picking the rest of the false pairs from $\mathcal{F}$. 
\end{proof}

Using the above result, we are now ready to have an expression for $\{C_{j}, j > 1\}$, given by
\begin{IEEEeqnarray}{rCl}
\label{eq:cj_expression}
C_j = \sum_{J=1}^{2|\Delta|-2} f_J\times\sum_{l=1}^{J}\binom{h(|\Delta|-1) - l}{j-1}.
\end{IEEEeqnarray}
From the above expression, it is clear that as long as we have $f_{J}$, we can compute $\{C_{j}\}$. Furthermore, by plugging the above values into \eqref{overall_fp}, we obtain the expression for average false positive probability. In the next section, we present our method to compute $\{f_{J}, J = 1, 2, \ldots, 2|\Delta| - 2\}$.

\begin{figure*}
\begin{small}
\begin{eqnarray}
\label{eqn:EDelta}
f^{e}_{J,|\Delta|} = \left[ \sum\limits_{i=0}^{J/2-1}\binom{J/2-1}{i} \times \binom{|\Delta| -J/2-1}{i}\right] \left[ \sum\limits_{i=0}^{J/2}\binom{J/2}{i} \times \binom{h-|\Delta| -J/2+1}{i}\right] \mbox{when } J \mbox{ is even},
\\
\label{eqn:EDeltaDash}
f^s_{J,\delta} = \left[ \sum\limits_{i=0}^{J/2-1}\binom{J/2-1}{i} \times \binom{\delta -J/2-1}{i}\right] \left[ \sum\limits_{i=0}^{J/2-1}\binom{J/2-1}{i}  \binom{h-\delta-J/2+1}{i}\right] \mbox{when } J \mbox{ is even}.
\end{eqnarray}
\end{small}
\hrulefill
\end{figure*}
\

\subsection{On computing $\{f_{J}\}$}
\label{sec:pattern}

Recall that $f_J$ represents the number of valid segment sequences in $\mathcal{P}$ that experience false positive events upon recovery of either of the $J$ false pairs as one extra recovery. To compute $f_{J}$, we enumerate the set of valid segment sequences based on the number of segments they contain, and then compute $f_{J}$ for each of such cases. In particular, we denote $f_{J,\delta}$ as the number of valid segment sequences containing only $\delta$ segments that experience false positive events upon recovery either of the $J$ false pairs as one extra recovery. With that we have $f_J = \sum_{\delta=1}^{|\Delta|}f_{J,\delta}$. While computing $f_{J,\delta}$ for each $\delta$, we notice that the values of $f_{J,\delta}$, for $1 \leq \delta \leq |\Delta|-1$ has a unique pattern that is deterministic. Also, the values for $f_{J,|\Delta|}$ has a pattern that is different from that of smaller values of $\delta$. Generalizing this pattern through regression, we propose to obtain $f_{J}$ as
\begin{IEEEeqnarray}{rCl}\label{eqn:f_j expanded}
    f_J = f^e_{2 \lfloor{J/2}\rfloor,|\Delta|} + \sum \limits_{\delta=1}^{|\Delta|-1} f^s_{2 \lfloor {J/2} \rfloor,\delta},
\end{IEEEeqnarray}
where the terms on the RHS of the above equation can be calculated using \eqref{eqn:EDelta} and \eqref{eqn:EDeltaDash} respectively. Finally, we use \eqref{eqn:f_j expanded} in \eqref{eq:cj_expression}, and in turn use them in \eqref{overall_fp} to get the expressions for average false positive probability.

\section{Experimental Results}


\begin{figure}
    \centering
    \includegraphics[width = 0.45\textwidth]{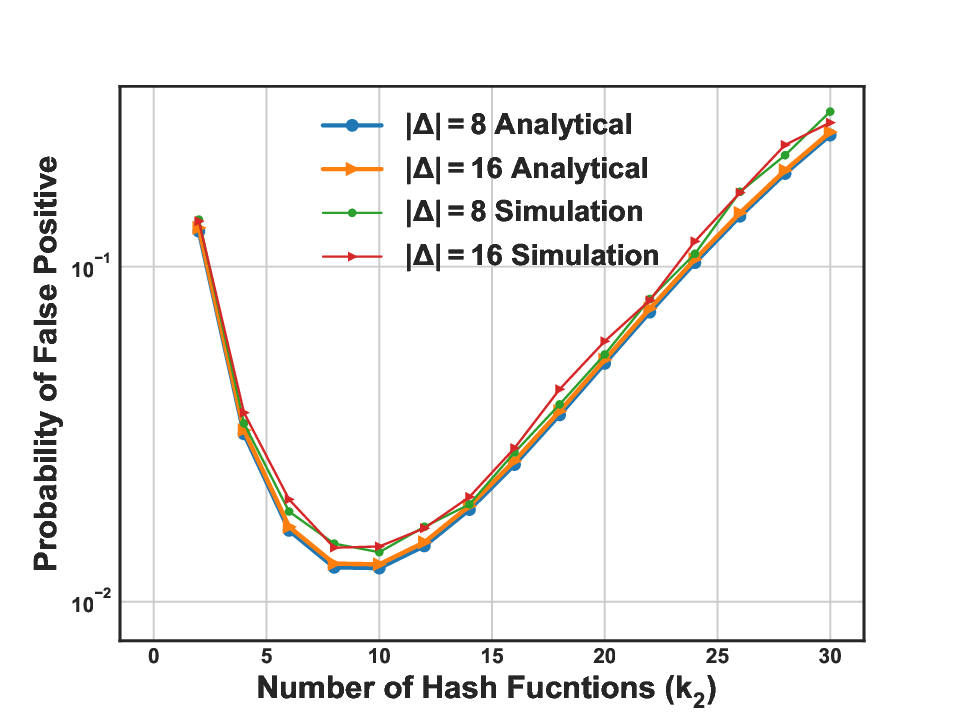}
    \caption{False positive probability using the analytical bound and the simulation results for varying values of $k_2$ on a network of $N = 16$, with $|\Delta| = 16$ and $|\Delta| =8$.}
    \label{fig:mix8-16}
\end{figure}

\begin{figure}
    \centering
    \includegraphics[width = 0.45\textwidth]{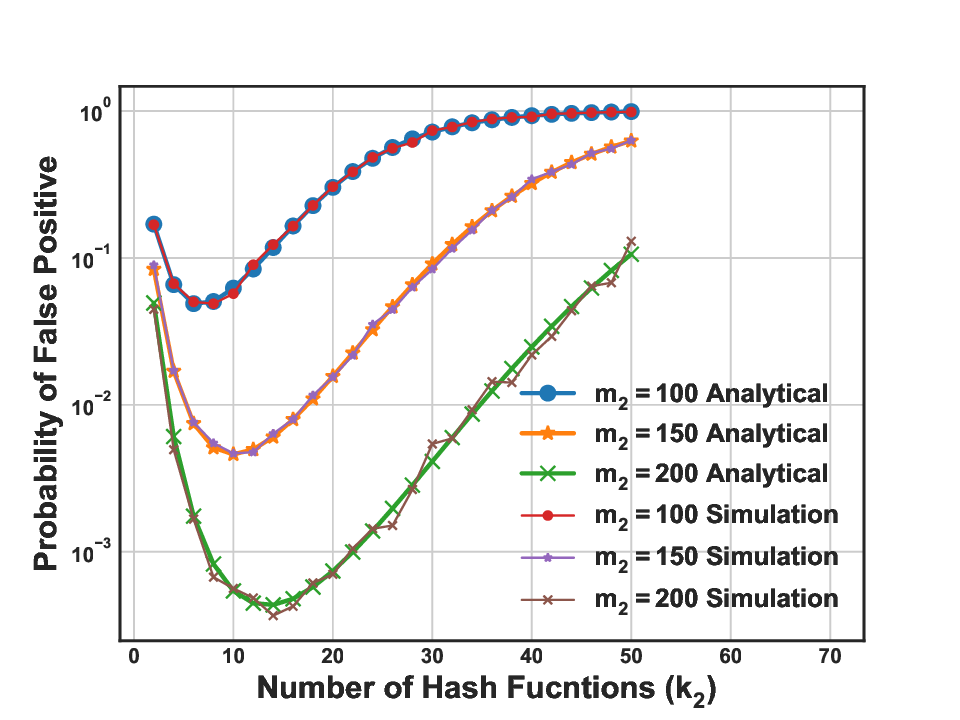}
    \caption{False positive probability using the analytical bound and the simulation results with varying $m_2$ on a network of $N = 11$ with $|\Delta| = 15$ and $h=10$.}
    \label{fig:m-varying}
\end{figure}

\begin{figure}
    \centering
    \includegraphics[width = 0.45\textwidth]{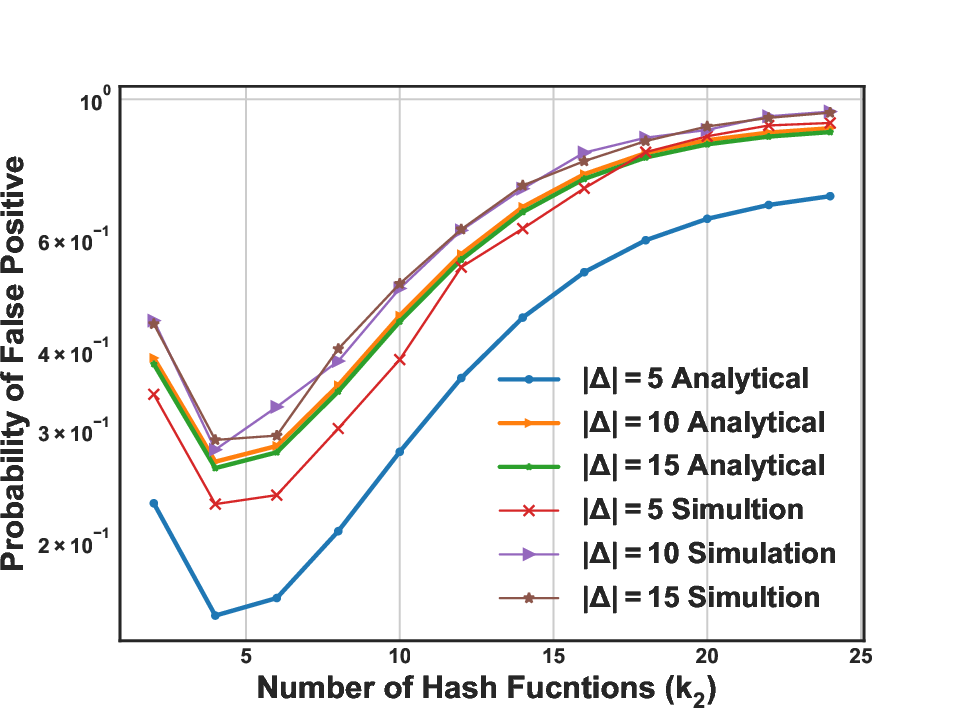}
    \caption{False positive probability using the analytical bound and the simulation results with varying $|\Delta|$ and constant $m_2=100, ~h=15$.}
    \label{fig:Delta_varying}
\end{figure}

In this section, we present experimental results to showcase the efficacy of our solution. To verify the correctness of the analytical expressions on the false positive probability, we compare them with the false positive probability generated empirically through simulation results. We simulated a network with $N = 16$, $h = 15$, $|\Delta| = 8$ where each segment has two nodes. Bloom filter of size $m_2=200$ bits is used, where the number of Hash functions is varied from $2 \leq k_2 \leq m_2$. For this setup, in Fig. \ref{fig:mix8-16}, we plot the resultant false positive probability obtained through analytical expressions. We repeated the same using simulation results for the same parameters. From Fig. \ref{fig:mix8-16}, we observe that the curve of the analytical expression acts as a good approximation to that generated via simulation results, and the minima of both the plots coincide. Note that the minima of the curves gives the optimal number of Hash functions, which will be used for embedding spatial provenance information in practice. In Fig. \ref{fig:m-varying}, we present similar results by varying the value of $m_2$ for a 10-hop network of $N=11$ nodes spread across $|\Delta|=15$. We observe that as the size of $m_2$ increases, the false positive probability decreases. This implies that with larger packet size, localization accuracy at the RSU will improve subject to a given choice of $|\Delta|$ as agreed by the nodes. Similarly, Fig. \ref{fig:Delta_varying} also shows that as the RSU needs to learn higher resolution of location for a given packet size, the accuracy of localization reduces. Thus, the only way to learn higher resolution of location with high accuracy is to increase the packet size.


\section{Discussion}

This study focused on learning the spatial provenance in multi-hop networks with the help of correlated Bloom filters. The main emphasis was to jointly handle the privacy of the nodes as well as the localization requirements of the RSU. For future work, an interesting direction is to design the Bloom filter parameters when the communication constraints imposed on the nodes in this work are relaxed.  


\section*{Acknowledgement}
This work was supported by the project titled ``Development of Network Provenance Techniques for Monitoring Wireless Networks" under Contract for Acquisition of Research Services (CARS) from the Scientific Analysis Group (SAG), DRDO, New Delhi, India.\looseness=-1

\bibliographystyle{IEEEtran} 
\bibliography{refs}

\end{document}